\newcommand{\doi}[1]{doi: \href{https://doi.org/#1}{\nolinkurl{#1}}}
\newtheorem{theorem}{Theorem}
\newtheorem{lemma}[theorem]{Lemma}
\newtheorem{definition}[theorem]{Definition}
\newtheorem{corollary}[theorem]{Corollary}
\newtheorem{proposition}[theorem]{Proposition}
\newtheorem{observation}{Observation}
\newcommand{\deleted}[1]{\sout{#1}}
\newcommand{\delete}[1]{\sout{#1}}
\newcommand{\changed}[1]{\textcolor{red!70!black}{#1}}
\newcommand{\deleted}[1]{}
\newcommand{\delete}[1]{}
\newcommand{\changed}[1]{#1}
\author[T. Biedl]{Therese Biedl\affiliationmark{1}\thanks{%
This research was supported by NSERC, FRN RGPIN-2020-03958.
The author would like to thank Daniel Rutschmann for inspiring
discussions, and Dinis Vitorino for careful proofreading.
}
}
\title[Face-hitting Dominating Sets]{Face-hitting dominating sets in plane graphs: Alternative proof and  linear-time algorithm}
\affiliation{%
  University of Waterloo, Waterloo, Ontario, Canada}
\keywords{plane graph, dominating set, face-hitting, ear decomposition, triangulating}
\begin{document}
\publicationdata{vol. 28:4, SOFSEM 2026}{2026}{3}{10.46298/dmtcs.17747}{2026-03-17; 2026-03-17; 2026-06-18}{2026-07-01}
\maketitle

\begin{abstract}
In a recent paper, Francis, Illickan, Jose and Rajendraprasad showed
that every plane graph $G$ has (under some natural restrictions)
a vertex-partition into two sets $V_1$ and $V_2$ such that each $V_i$
is \emph{dominating} (every vertex of $G$ contains a vertex of $V_i$ in its
closed neighbourhood) and \emph{face-hitting} (every
face of $G$ is incident to a vertex of $V_i$).   Their proof works by
considering a plane supergraph $G'$ of $G$ that has certain properties, and
among all such graphs, taking one that has the fewest edges.   As such,
their proof is not obviously algorithmic.   Their proof also relies on the
4-colour theorem, for which the fastest known algorithm is quadratic (and not simple).

In this paper, we give a new proof that every plane graph $G$ has
(under the same restrictions) a vertex-partition into two 
face-hitting dominating sets.   Our proof is constructive, and requires nothing more
complicated than splitting a graph into 2-connected components, finding
an ear decomposition, and computing a perfect matching in a 3-regular
2-connected plane graph.   For all these problems, linear-time algorithms are known
and so we can find the vertex-partition in linear time.

A crucial ingredient in our proof is to find (for a 2-connected plane
graph $G$) a plane supergraph $G^+$ that is a triangulation, possibly with parallel
edges.    Furthermore, for every vertex $v$ (with one exception) 
the cyclic order of edges around $v$ in $G^+$ contains two consecutive
edges that are in $G$.  
We study this problem further, and show
that except for the case where $G$ is an odd cycle, we can actually find
a triangulation where \emph{every} vertex has such a pair of edges.
\end{abstract}

\section{Introduction}
\label{sec:intro}

One of the oldest problems in graph theory is how to find a small
\emph{dominating set} in a graph $G$, i.e., a set $S$ of vertices
such that for every vertex $v$, either $v$ is in $S$, or some neighbour
of $v$ is in $S$.     The goal is to minimize $|S|$, and the corresponding
decision problem is NP-complete and remains NP-complete even in planar graphs 
under restrictions
such as being 4-regular or having maximum degree~3 \citep{GJ79}.
However, in a planar graph one can efficiently approximate a minimum dominating set
arbitrarily closely, using Baker's approximation scheme \citep{Baker94}.

Related to finding a minimum dominating set is the question how big it
may be required to be.
Since a graph consisting of $n$ isolated vertices clearly requires
size $n$ for any dominating set, one typically imposes some restriction on the
class of graphs studied.   One interesting result here was given by 
\citet{MT96}, who showed that every $n$-vertex simple triangulated planar graph $G$ 
has a dominating set of size at most $n/3$.   
(See Section~\ref{sec:preliminaries} for definitions.)
In fact, they show the stronger result that the vertices of $G$ can be partitioned into 
three disjoint sets that each are dominating.    By combining disjoint copies of $K_4$,
one can easily construct planar triangulated graphs
that require at least $n/4$ vertices in any dominating set, and the quest to bring the upper bound
closer to $n/4$ has given rise to a number of interesting research insights. 
See the paper by \citet*{CRR24} for the currently 
best upper bound of $\tfrac{2}{7}n$ (if $n>10$), and the
references therein for more results for other related classes of planar graphs.

In a recent paper, \citet*{FIJR24}
studied an interesting variant of this problem, where they are looking for
a vertex set that is not only a dominating set, but that also is 
\emph{face-hitting}, i.e.,  they presume that the planar graph $G$ comes with a
fixed crossing-free drawing (it is \emph{plane}),
and they want a dominating set $S$ where
additionally every face of the drawing is incident to at least one vertex of $S$.
It is known from an earlier construction by \citet*{BSTZ97}
that there are simple triangulated planar  graphs where every face-hitting vertex set has
size at least $\lfloor n/2 \rfloor$.    (In a triangulated planar graph, such a set
is automatically also dominating.) Francis et al.~match this bound, 
and in particular, show
that any plane graph (under some natural restrictions, such as ``no isolated vertices'')
has a vertex-partition into two face-hitting dominating sets; the smaller
of these sets has size at most $\lfloor n/2 \rfloor$.   
\changed{(See also Figure~\ref{fig:intro}(a).)}
This in turn can be used to improve the bound of \citet{CRR24}
in some triangulated planar graphs: 
If $G$ has an independent set $I$ of size exceeding $3n/7$,
then $G\setminus I$ has fewer than $4n/7$ vertices, and
by taking a small face-hitting dominating set of $G\setminus I$ one can obtain a 
dominating set of $G$ of size less than $2n/7$.

From a mathematical point of view the result of \citeauthor{FIJR24} is very
nice since it matches the lower bound. But from an algorithmic point of view it is somewhat
disappointing, since the authors do not discuss how to find the vertex-partition efficiently,
and it is not obvious how to do so.   
In particular, a crucial ingredient of their proof is to use a plane supergraph $G'$ of
$G$ that has the maximum number of so-called ``happy vertices'' and ``happy faces'',
and among all such
graphs, take the one with the fewest added edges.    Clearly $G'$ always
exists, but it is not clear how to find it.   From later properties that the authors show
about $G'$, it seems likely that one could simply start with $G':=G$, search for
a violation of those properties, and then update $G'$ according to the steps in their
proofs.   It is not clear how fast such a procedure would be, but it would likely take at
least quadratic time.

A second bottleneck is that Francis et al.~use the 4-colour theorem to colour the graph $G'$ (and
extract the vertex-partition by combining two colour-classes).   While there is an
algorithm to find the 4-colouring of a planar graph by \citet*{RSST97}, the run-time is quadratic
(with a very large constant factor). 

\begin{figure}[ht]
\hspace*{\fill}
	\subfigure[]{\includegraphics[scale=1.1,page=3]{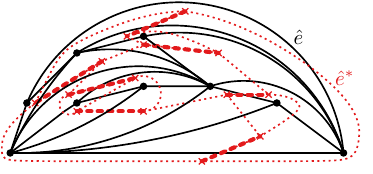}}
\hspace*{\fill}
	\subfigure[]{\includegraphics[scale=1.1,page=4]{why_new.pdf}}
\hspace*{\fill}
\changed{
	\caption{(a) A plane graph $G$, not simple, with a partition into two face-hitting dominating sets.   
	(b) A supergraph $G^+$ (added edges are blue dotted). Graph $G^+$ is a triangulation where every vertex is happy due to the green shaded angles.}
}
\label{fig:intro}
\end{figure}

In this paper, we therefore re-prove the result by Francis et al.~in a different way
that immediately leads to a linear-time
algorithm to find the vertex-partition.    Rather than relying on the 4-colour
theorem, we find a bipartite subgraph $H$ of $G$ whose 2-colouring then gives the
vertex-partition.   We ensure that for every vertex and every face of $G$ 
at least one incident edge is retained in $H$, so either colour-class is a face-hitting
dominating set.     We sketch here an outline of how to find $H$.   This 
is easy if $G$ is triangulated: remove the edges corresponding to a perfect matching in the dual of $G$ (this always
exists by Petersen's theorem).   If $G$ is not triangulated but 2-connected, then we first create a triangulated 
plane supergraph $G^+$ of $G$ where every vertex $v$ (with one exception) is \emph{happy}:
in the cyclic order of edges around $v$ in $G^+$ there are two consecutive
edges that are in $G$.  \changed{(See also Figure~\ref{fig:intro}(b).)} Graph $G^+$ is obtained by computing an ear decomposition of $G$, and 
triangulating each face suitably when its corresponding ear is added.   Since
vertices of $G^+$ are happy, removing the edges of a perfect matching in the dual of $G^+$
then again gives $H$.   Finally, for a graph $G$ that is not 2-connected,
we simply combine the subgraphs of the 2-connected components.  All steps can easily be 
implemented in linear time, leading us to our main result:

\begin{theorem}
\label{thm:main}
There exists a linear-time algorithm for the following task:   Given a plane
graph $G=(V,E)$ without isolated vertices or faces incident
to at most two vertices, partition $V$ into two vertex sets $V_1,V_2$ that 
both are dominating as well as face-hitting.
\end{theorem}

The main new ingredient in our paper is an algorithm that finds, for a given 2-connected plane graph $G$, 
a triangulated supergraph $G^+$ with many happy vertices.
This problem is related to other papers
that dealt with triangulating a planar graph under some constraint, see e.g.~\citet{KB97} for
triangulating planar graphs while keeping the maximum degree small, and \citet{BKK97} for
triangulating planar graphs while making them 4-connected.     
There are also many papers that consider triangulating geometric objects (such as a set of points or a
polygon) under some constraints, see e.g.~\citet{BEEMT93} and the references therein.
Finally, making vertices happy can be seen as the problem of assigning angles to 
vertices under some constraints, and as such our problem 
is also related to papers that studied angle-assignments (under different constraints), see
e.g.~\citet{EGSW21}.
But to our knowledge the question of triangulating a planar graph while making vertices happy
has not been studied previously except by \citet{FIJR24}.    
We feel that it deserves more attention, and therefore study
it beyond what is needed as ingredient for the proof of Theorem~\ref{thm:main}.   Specifically,
we show the following.

\begin{restatable}{theorem}{ThmMakeHappy}
\label{thm:make_happy}
\label{thm:makeHappy}
	Let $G$\delete{$=(V,E)$} be a 2-connected plane graph with \changed{at least three} vertices.   If $G$ is not an odd-length cycle,
then there exists a (possibly non-simple) triangulated plane supergraph $G^+$ without loops
such that all vertices are happy.
It can be found in linear time.
\end{restatable}

We will also show that the restrictions ``2-connected'', ``not an odd-length cycle'' and
``possibly non-simple'' are required in the sense that for some graphs, not all
vertices can be made happy otherwise.

\section{Preliminaries}
\label{sec:preliminaries}

This section clarifies some notation for a graph $G=(V,E)$;
for most standard notation, see for example Diestel's book \citep{Die12}.
Throughout, $n$ is the number of vertices, and we assume \changed{that} $n\geq 2$.
A \emph{loop} is
an edge where the two endpoints coincide, while two \emph{parallel edges}
are two edges with the same set of endpoints.   A graph is \emph{simple} if
it has no loops and no parallel edges.    We do not assume that $G$ is simple
(except where explicitly stated below).
A \emph{$k$-walk} is a closed walk with $k$ edges; it is called a
\emph{$k$-cycle} if no vertex repeats.

Recall that a vertex set $S$ is \emph{dominating} if for every $v\in V$
either $v\in S$ or some neighbour of $v$ is in $S$.   An \emph{edge cover}
is a set $S$ of edges such that for every $v\in V$ at least one incident edge
is in $S$.   A \emph{perfect matching} is a set $M$ of edges where every
vertex is incident to exactly one edge in $M$.  

Throughout the paper, we will assume that $G$ is connected, since otherwise 
we can obtain a partition into face-hitting dominating sets in each component and
combine them to get one for $G$.   
Graph $G$ is called \emph{2-connected} if it does not contain a \emph{cut-vertex},
i.e., a vertex $v$ such that $G\setminus v$ has more connected components than $G$.
A \emph{2-connected component} $C$ is a maximal subgraph of $G$ that is 2-connected.
For the borderline case of a vertex $v$ with incident loop(s) in a graph with \changed{at least two}
vertices, we consider
$v$ to be a cut-vertex and each subgraph formed by taking $v$ and one of the loops 
to be a \emph{trivial} 2-connected component.  

A \emph{planar graph} is a graph $G$ that can be drawn without crossing in
the plane, or equivalently, on the sphere $\mathbb{S}^2$.
Consider a crossing-free drawing $\Gamma$ of $G$ on the sphere. 
The maximal regions of $\mathbb{S}^2\setminus \Gamma$ are called the \emph{faces}. 
Since $G$ is connected, drawing $\Gamma$ can be described by specifying 
the \emph{rotation scheme}, i.e., the clockwise cyclic order of edges at each vertex.
In this paper, the input graph $G$ will always be planar, and in fact it will
be \emph{plane}, i.e., it comes with a fixed rotation scheme.
This rotation scheme determines the faces implicitly: 
Every face can be described via the walk that defines its \emph{facial boundary}, and
this walk can be read from the rotation scheme.
An \emph{angle $\angle uvw$ incident to vertex $v$}
consists of two edges $(u,v)$ and $(v,w)$ that are consecutive in the cyclic order
of edges at $v$.
If $\deg(v)=1$ then $v$ has exactly one incident angle (using the same edge twice);
if $\deg(v)\geq 2$ then at all incident angles the two edges are distinct.
For every angle the two edges are consecutive on the facial boundary of
some face $F$; we hence also call it an \emph{angle incident to $F$}.

The \emph{degree} of a face $F$ is the number of angles incident to it, while the \emph{size} of $F$
is the number of vertices incident to $F$.   For a 2-connected plane graph
the two concepts coincide since the boundary of $F$ is a cycle.
A \emph{bigon} is a face whose boundary consists of two parallel edges and so has
degree 2.  
A \emph{triangulated graph} is a plane graph where every face has degree 3, i.e., it is triangular.
Note that this definition forbids bigons and faces bounded by loops, 
but in this paper we specifically \emph{allow} a triangulated
graph to have parallel edges or loops as long as they do not bound a face.   
(We use the term ``simple triangulated graph'' if no loops or parallel edges are allowed.)
A \emph{$3^+$-face} is a face with size 3 or more; this obviously excludes faces of degree~1 or~2,
but it also excludes some faces of higher degrees, for example a triangular face that is bounded by a loop and two parallel edges.  

Let $S$ be either a vertex-set or an edge-set.  A face $F$ is \emph{hit} by $S$ 
if it is incident to at least one element of $S$.   Set $S$ is called
\emph{face-hitting} if it hits every face, and \emph{$3^+$-face hitting} if it
hits every $3^+$-face.   

In this paper, we are usually given a plane graph $G$, and we will work with some
subgraph or planar supergraph of $G$ obtained by deleting/adding edges.
Without further mentioning we assume that these graphs use the rotation schemes 
obtained as explained below.   For a
subgraph $H$ of $G$, we use the \emph{inherited} rotation scheme, i.e., we keep
the same cyclic order of edges at all vertices except for deleting edges that are
not in $H$.    Any planar supergraph $G^+$ that we construct is 
a \emph{plane supergraph}, i.e., it \emph{respects the embedding of $G$}.   This
means that for any newly added edge $e=(v,w)$ there is a face $F$ incident to both $v$ and $w$,
and we add the edge \emph{inside face $F$}, which means that edge $e$ gets inserted
into the cyclic orders at $v$ and $w$ between the two edges of an angle that
is incident to $F$. (We will do this only if $G$ is 2-connected and the angles
are hence unique.)

The \emph{dual graph} of $G$ is the graph $G^*=(\mathcal{F},E^*)$ that has a vertex
$v_F$ for every face $F$ of $G$, and a \emph{dual edge} $e^*=(v_F,v_{F'})$ whenever
faces $F$ and $F'$ share an edge $e$ on their boundary; it includes a loop at
vertex $v_F$ whenever face $F$ is incident twice to an edge $e$.   The degree
of vertex $v_F$ in $G^*$ equals the degree of face $F$ in $G$.
The dual graph $G^*$
has a natural rotation scheme:   For every vertex $v_F$ of the dual
graph, list the incident edges in the order in which their dual edges appeared 
around the facial boundary of face $F$.

It will sometimes be convenient to \emph{choose an outer face of $G$} as follows.
We have a spherical drawing $\Gamma$ of $G$ (this can be read from the rotation scheme).
For any face $F$ of our choice, we can pick a point $p$ strictly inside $F$ and
project the drawing $\Gamma$ into the plane using $p$ as projection point.
With this, every cycle of $G$ then has an interior and exterior side.
Also face $F$ gets projected to an unbounded area of the plane, and we hence
call it the \emph{outer face} 
while all other faces are \emph{interior}.  An \emph{interior vertex} is a vertex 
not incident to the outer face.

\section{Bipartite Edge Covers}

To prove Theorem~\ref{thm:main}, it suffices to prove the following result.

\begin{restatable}{theorem}{BipEdgeCover}
\label{thm:bipEdgeCover}
	Let $G$\deleted{$=(V,E)$} be a connected plane graph with \changed{at least two} vertices.
	Then $G$ has a bipartite subgraph $H$ \deleted{of $G$} whose edges form a $3^+$-face-hitting
edge cover of $G$.  Furthermore, we can pre-specify a \emph{reference-edge} $\hat{e}$ of $G$ and require
that $H$ includes $\hat{e}$.   Subgraph $H$ can be found in linear time.
\end{restatable}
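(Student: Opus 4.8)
The plan is to reduce the whole statement to one computation --- a perfect matching in the dual of a triangulated plane graph --- and to absorb the two ways in which $G$ can deviate from that ideal (it may be non-triangulated, and it may have cutvertices) by an augmentation step and a recursion step. I begin with the clean core: $G$ is $2$-connected and triangulated. Then $G$ is loopless and bridgeless, so its dual $G^{*}$ is a loopless, bridgeless, cubic multigraph and, by Petersen's theorem, has a perfect matching $M^{*}$; for plane cubic graphs such a matching is computable in linear time. Let $M\subseteq E$ be the primal edges dual to $M^{*}$ and put $H:=G-M$. Since $M^{*}$ is perfect, each triangular face loses exactly one boundary edge, so the two triangles across each edge of $M$ merge into a quadrilateral and every face of $H$ has degree $4$; applied to each component, this makes $H$ bipartite. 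At a vertex of degree $d$ at most $\lfloor d/2\rfloor<d$ incident edges can be matched, so $H$ covers every vertex; and as each triangle keeps two edges, every $3^{+}$-face is hit. For the reference edge $\hat e$ I want a matching of $G^{*}$ avoiding the dual edge $\hat e^{*}$: choosing any edge incident to an endpoint of $\hat e^{*}$ but different from it, matching-coveredness of bridgeless cubic graphs puts it in some perfect matching, which then avoids $\hat e^{*}$ and keeps $\hat e$ in $H$.

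The genuinely hard case, as the introduction warns, is $G$ $2$-connected but not triangulated, because the final subgraph must live in $G$ whereas the matching trick lives on a triangulation $G^{+}\supseteq G$. My plan is to build $G^{+}$ through an open ear decomposition of $G$, triangulating each region at the moment its ear closes it off, but choosing the added diagonals so as to \emph{protect original edges}. Concretely, I would maintain a ``happy'' invariant: every vertex processed so far has an incident original edge that the eventual subgraph is guaranteed to keep, and every $3^{+}$-face of $G$ seen so far will be hit by an original boundary edge. The triangulation of each new region and the selection of the dual matching are coordinated so that the edges evicted from the interior of a former face are the added diagonals rather than the original boundary edges. I expect to finish with at most one unhappy vertex --- precisely the degree of freedom that the reference edge $\hat e$ is there to spend --- and then $H:=H^{+}\cap E(G)$ is the bipartite subgraph of $G$ we need. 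Making this invariant precise, and proving that a legal (bigon-free) triangulation together with a compatible matching always exists and is computable in linear time, is the step I expect to be the crux of the whole argument.

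For an arbitrary connected $G$ I would finally recurse over its $2$-connected components. Each block is handled by the two cases above, and the block subgraphs are glued at the cutvertices: bipartiteness survives because a $2$-colouring of one block can be recoloured to agree with another at their common cutvertex, while the edge-cover property is purely local. The one non-local point is a $3^{+}$-face of $G$ obtained by \emph{merging} faces of several blocks around a cutvertex, which could be a $3^{+}$-face even though each constituent block-face is too small to be one; here I would use the reference-edge option, instructing each block to retain a chosen edge incident to the shared cutvertex so that the merged face is certainly hit. Threading these reference edges through the recursion is bookkeeping rather than mathematics, so the decisive difficulty remains the ear-decomposition triangulation of the non-triangulated $2$-connected case.
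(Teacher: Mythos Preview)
Your outline is essentially the paper's own proof: dual perfect matching for the triangulated case, an ear-decomposition-guided triangulation with a ``happy'' invariant for the $2$-connected case (leaving one vertex unhappy, absorbed by the reference edge), and a block decomposition for the general case. You have correctly identified where the work lies.

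One point deserves sharpening. In the block step you write that you would ``instruct each block to retain a chosen edge incident to the shared cutvertex''. But a block has only \emph{one} reference edge, while it can have many cutvertices, each potentially bordering a merged $3^{+}$-face. The paper resolves this not via cutvertices but via outer faces: fix the outer face of $G$ to contain $\hat e$, and for every block $C$ take $\hat e_C$ on the outer face of $C$. The planar nesting of blocks then guarantees that any $3^{+}$-face $F$ of $G$ touching several blocks contains part of the outer-face boundary of at least one non-trivial block $C$, so $\hat e_C$ lies on $F$ and hits it. (A residual case where all such blocks are loops is handled by observing that the remaining non-loop edges of $F$ lie in a single block and bound a $3^{+}$-face there.) This is a genuine idea, not just bookkeeping, and your cutvertex-based scheme does not obviously supply it.

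A second small gap: you do not say what happens when $G$ itself has bigons. The paper strips one edge from each bigon first, solves the bigon-free problem, and then re-inserts the parallel copies of any surviving edges; you should do likewise before building $G^{+}$.
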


To see why this implies Theorem~\ref{thm:main}, consider a 2-colouring of $H$, say into vertex-sets
$V_1$ and $V_2$,   and let $i\in \{1,2\}$. Since $E(H)$ is an edge cover, for every vertex
$v$ there exists an edge $(v,w)$ in $H$.  Since $H$ is 2-coloured, either $v\in V_i$ or $w\in V_i$, and so $v$ is dominated by $V_i$. 
Since $E(H)$ is $3^+$-face-hitting, for every $3^+$-face $F$ of $G$ there exists an edge $(x,y)$ incident to $F$ in $H$;
either $x\in V_i$ or $y\in V_i$, and so $V_i$ is $3^+$-face-hitting.   Since by assumption of Theorem~\ref{thm:main}
all faces of $G$ are $3^+$-faces, the result holds.   (We note here
that Theorem~\ref{thm:main}, which we intentionally kept very similar to the statement in \citet{FIJR24},
could be strengthened ever so slightly to permit faces of size 1 or 2, as long as they need not be hit.)

\subsection{Triangulated graphs}

We first prove Theorem~\ref{thm:bipEdgeCover} for a triangulated planar graph without loops, where this is
very easy.   For later use as a subroutine, we actually prove a stronger claim on the edges $E(H)$ of $H$: at
every angle at least one edge is retained.
This immediately implies that $E(H)$ is a face-hitting edge cover of $G$ since by connectivity
every vertex and every face is incident 
to an angle.

\begin{lemma}
\label{lem:triangulated}
	Let $G$ be a triangulated planar graph with \changed{at least three} vertices and no loops. Fix a refe\-rence-edge~$\hat{e}$.   Then $G$
has a bipartite subgraph $H$ that contains~$\hat{e}$, and where for every angle $\angle uvw$
of $G$, at least one of the edges $(u,v)$ and $(v,w)$ is in $H$.
Subgraph $H$ can be found in linear time.
\end{lemma}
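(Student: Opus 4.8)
The plan is to pass to the planar dual $G^*$ of $G$ and delete from $G$ one edge of each face. Since $G$ is triangulated, every face has degree $3$, so $G^*$ is $3$-regular; and since $G$ has no loops, $G^*$ has no bridges (a bridge of $G^*$ would dualize to a loop of $G$). Thus $G^*$ is a connected, bridgeless, cubic plane graph, so by Petersen's theorem it has a perfect matching $M^*$, which for planar cubic graphs can be found in linear time. Let $M\subseteq E$ be the set of primal edges dual to the edges of $M^*$ and set $H:=(V,E\setminus M)$. Because $M^*$ is a perfect matching, every vertex of $G^*$—that is, every face of $G$—is incident to exactly one edge of $M$.

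The angle property is then immediate. An angle $\angle uvw$ of $G$ is an angle of some triangular face $F$, whose three boundary edges are $(u,v)$, $(v,w)$, and the edge closing the triangle. Exactly one of these three edges lies in $M$, so at most one of $(u,v)$ and $(v,w)$ is deleted; hence at least one of them survives in $H$, as required. In particular $E(H)$ hits every face and covers every vertex.

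The conceptual crux is that $H$ is bipartite. Deleting the primal edge dual to a matching edge is the same as contracting that edge in $G^*$, so the dual of $H$ is $G^*/M^*$; contracting a perfect matching in a cubic graph merges the vertices in pairs, each merged vertex having degree $3+3-2=4$. Equivalently, each deleted edge fuses its two incident triangles into a single quadrilateral, and since $M^*$ is perfect the faces of $G$ pair up exactly, so \emph{every} face of $H$ has degree $4$. A connected plane graph all of whose faces have even degree is bipartite: for any cycle $C$, the sum of the degrees of the faces enclosed by $C$ equals $|C|$ plus twice the number of enclosed edges, so $|C|$ has the (even) parity of that sum. Hence every cycle of $H$ is even and $H$ admits a $2$-coloring. (Connectivity of $H$ follows since its dual $G^*/M^*$ is connected.)

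It remains to force $\hat{e}\in H$, i.e.\ to choose $M^*$ avoiding the dual edge $\hat{e}^*$; equivalently, $G^*-\hat{e}^*$ must have a perfect matching. I would verify this via Tutte's theorem, exploiting that in a cubic graph the number of edges leaving any set $W$ has the parity of $|W|$, and that bridgelessness forces every such cut to have at least two edges (hence at least three when $|W|$ is odd). If $G^*-\hat{e}^*$ had no perfect matching there would be a set $S$ that is tight in $G^*$ while $\hat{e}^*=xy$ splits an \emph{even} component into two odd halves; but tightness forces all $3|S|$ edge-endpoints at $S$ to be absorbed by the $|S|$ odd components (at least three each), leaving no edges to any even component—contradicting that the even part containing $x,y$ sends edges to $S$. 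So a matching avoiding $\hat{e}^*$ exists. The step I expect to be the main obstacle is turning this existence into a \emph{linear-time} construction: Petersen's theorem is non-constructive, so I would match $x$ and $y$ to suitable other neighbours first and then apply the known linear-time planar-cubic matching routine to the remainder, and check that the dual construction, edge removal, and $2$-coloring all stay linear.
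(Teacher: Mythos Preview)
Your approach is essentially the paper's: pass to the cubic bridgeless dual $G^*$, take a perfect matching $M^*$, and delete the primal edges dual to $M^*$; the resulting graph has all faces of degree~$4$ and is therefore bipartite, and the angle property follows since each triangular face loses exactly one boundary edge.

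The one place you diverge is in forcing $\hat{e}\in H$, and this is where your write-up has a gap. The paper's trick is simpler and directly resolves the linear-time issue you flag as the main obstacle: the known linear-time algorithm for perfect matchings in bridgeless cubic planar graphs already lets you \emph{prescribe} one edge to lie in $M^*$. So simply force some edge $f^*$ sharing an endpoint with $\hat{e}^*$ into $M^*$; then $\hat{e}^*\notin M^*$ automatically and $\hat{e}\in H$. Your Tutte-style argument that a perfect matching avoiding $\hat{e}^*$ \emph{exists} is correct, but your proposed linear-time construction---pre-match $x$ and $y$ to other neighbours and then run the cubic-planar matching routine on the remainder---does not work as stated: after deleting four matched vertices from a cubic graph, the remainder is neither cubic nor necessarily bridgeless, so that routine no longer applies. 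Routing through the ``force an adjacent edge in'' observation avoids this detour entirely.
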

\begin{proof}
The technique to find $H$ is well known, see e.g.~\citep{BKL-CCCG96,BBDL01}, but we review it here to
prove the additional claims and verify that it works even if $G$ has parallel edges. 
Since $G$ is triangulated, its dual graph $G^*$ is 3-regular.   
Graph $G^*$ is also 2-connected, for otherwise
it (as a 3-regular graph) would have an edge $e^*$ whose removal disconnects $G^*$,
and this edge is dual to a loop of $G$.  By Petersen's theorem \citep{Pet1891},
$G^*$ therefore has a perfect matching $M^*$.   In fact (see e.g.~the proof of Petersen's theorem by \citet{BBDL01}) we can require
one edge of our choice to be in the perfect matching $M^*$;    we choose it here to be an edge that shares exactly one endpoint
with $\hat{e}^*$ so that $\hat{e}^*$ is \emph{not} in $M^*$.   This matching can be computed in linear time
\citep{BBDL01}.  

Define $M$ to be those edges of $G$ whose dual edges are in $M^*$, and let $H:=(V,E\setminus M)$, see also Figure~\ref{fig:triangulated}.
Since $M^*$ is a perfect matching, all faces of $H$ have degree 4, which
makes $H$ a bipartite subgraph of $G$.
Since the dual of $\hat{e}$ is not in $M^*$, reference-edge $\hat{e}$ is in $H$.
Consider any angle $\angle uvw$ of $G$, say incident to face $F$, and observe that its two edges $(u,v)$ and $(v,w)$ are distinct since
$G$ has no vertices of degree~1.  Since $M^*$ contains exactly one edge incident to $v_F$,
at most one of $(u,v)$ and $(v,w)$ can be in $M$, and the other is retained in $H$.
\end{proof}

\begin{figure}[ht]
\hspace*{\fill}
	\subfigure[]{\includegraphics[scale=1.1,page=1]{why_new.pdf}}
\hspace*{\fill}
	\subfigure[]{\includegraphics[scale=1.1,page=2]{why_new.pdf}}
\hspace*{\fill}
	\caption{\changed{(a) The triangulation from Figure~\ref{fig:intro}(b),}
	with its dual graph (red dotted) and a perfect matching in it (thick dashed).   
	\changed{(b) The} resulting bipartite graph $H$\changed{; its 2-coloring gives the face-hitting dominating sets from Figure~\ref{fig:intro}(a).}
	}
\label{fig:triangulated}
\end{figure}

\subsection{2-\changed{C}onnected Plane Graphs}
\label{sec:2connected}

Next we prove Theorem~\ref{thm:bipEdgeCover} for a 2-connected plane graph $G$
in the special case where $G$ has no bigons.
The plan is to find a \emph{plane triangulation} of $G$ (i.e., a supergraph $G^+$ that 
is triangulated and respects the embedding of $G$), 
and apply Lemma~\ref{lem:triangulated} to $G^+$.   For this
to work out, we need that some angles of $G$ are \emph{preserved} in $G^+$, i.e.,
no edge of $G^+\setminus G$ has been inserted into the angle, and so the angles are also
angles of $G^+$.    This concept can be applied to any plane supergraph $G^+$; when
$G^+$ is a triangulation then ``$\angle uvw$ has been preserved'' is equivalent
to what \citet{FIJR24} called being \emph{happy}:  $G^+$ contains a triangular face
that includes the edges $(u,v)$ and $(v,w)$ of the angle.
Call a vertex \emph{happy} if it has at least one incident happy angle.
The following result is crucial for proving Theorem~\ref{thm:bipEdgeCover},
but we defer its proof to Section~\ref{sec:makeHappy} where we will actually prove
a stronger result.

\begin{proposition}
\label{lem:makeHappy}
	Let $G$ be a 2-connected plane graph with \changed{at least three} vertices and no bigons.   Fix a vertex~$s$.   
Then $G$ has a triangulated plane supergraph $G^+$ without loops in which
all vertices except perhaps $s$ are happy.   Supergraph $G^+$ can be found in linear time.
\end{proposition}

We also need an easy observation about angles incident to faces, which \changed{implies
that there is no} need to explicitly ensure that faces have incident happy angles
(the approach taken by \citet{FIJR24}).

\begin{observation}
\label{obs:face_happy}
	Let $G$ be a 2-connected plane graph with \changed{at least three} vertices and no bigons, and let $G^+$ be any plane supergraph of $G$. 
Then every face $F$ of $G$ has at least two incident angles that are preserved in~$G^+$.
\end{observation}
\begin{proof}
The boundary $B$ of $F$ is a cycle with at least three vertices since $G$ is 2-connected without bigons.
Consider the subgraph $G_F$ of $G^+$ formed by $B$ and all edges that were inserted inside $F$.    This 
is a 2-connected graph with at least three vertices that can be drawn with all vertices on one face
(it is \emph{outerplanar}).   It is well-known that such a graph has at least two vertices of degree exactly~2.
Let $w$ be
one of these vertices.   It lies on $B$, hence its two incident edges in $G_F$ belong to $B$, and form an angle 
in $G^+$ since $G_F$ includes all edges that we added inside $F$.   So the angle incident to $F$ at $w$ is preserved.
\end{proof}

\begin{corollary}
\label{cor:2connected}
Theorem~\ref{thm:bipEdgeCover} holds for any 2-connected plane graph $G$ with $n\geq 2$ vertices and no bigons.
\end{corollary}
\begin{proof} 
Clearly the theorem holds if $n=2$ since $G$ itself is then bipartite, so we may assume that $n\geq 3$.   
Let $s$ be an endpoint of the reference-edge $\hat{e}$.
Using Proposition~\ref{lem:makeHappy}, find a triangulated plane supergraph $G^+$ of $G$ where all vertices except
perhaps $s$ are happy.   
Apply Lemma~\ref{lem:triangulated} to $G^+$ with respect to edge $\hat{e}$.   
This gives us, in linear time, a bipartite subgraph $H^+$ of $G^+$ that contains $\hat{e}$; in particular
vertex $s$ has $\hat{e}$ as an incident edge in $E(H^+)\cap E(G)$.   For every
vertex $v\neq s$ there exists an incident angle in $G^+$ 
where both edges belong to $G$.   At least one of these edges is retained
in $H^+$, so $v$ also has at least one incident edge in $E(H^+)\cap E(G)$.
Define $H$ to be the bipartite subgraph of $H^+$ with edge set $E(H^+)\cap E(G)$; by the
above $E(H)$ is an edge-cover. 
By Observation~\ref{obs:face_happy} edge set $E(H)$ is also face-hitting: For any face $F$ of $G$,
some incident angle was preserved in $G^+$,
at least one edge of this angle is retained in $H^+$, and hence $F$ has an incident edge in $E(H)$.
\end{proof} 

As it turns out, bigons do not pose a problem either.

\begin{corollary}
\label{cor:2connectedBigons}
	Theorem~\ref{thm:bipEdgeCover} holds for any 2-connected plane graph $G$ with \changed{at least two} vertices. 
\end{corollary}
\begin{proof}
Let $G'$ be the subgraph of $G$ obtained by deleting, for every bigon of $G$, one of the parallel edges that
bound the bigon (choosing one that is not reference-edge $\hat{e}$, if $\hat{e}$ is in a bigon).      Graph 
$G'$ is 2-connected (adding parallel edges cannot increase vertex-connectivity) and has no bigons, and so by 
Corollary~\ref{cor:2connected} we can find a bipartite subgraph $H'$ such that $E(H')$ is a face-hitting
edge cover of $G'$.   
Let $H$ be the bipartite subgraph of $G$ obtained by
adding to $H'$ all bigon-edges where one edge of the bigon was in $H'$.   
Clearly $H$ is bipartite and all steps to find it can be implemented in linear time.

We claim that $E(H)$ is $3^+$-face-hitting (and of course it is an edge cover since already $E(H')$ was one).   Let $F$ be
any $3^+$-face of $G$.   If $F$ was also a face of $G'$, then it was hit by $E(H')$ and so
also by $E(H)$.    If $F$ was not a face of $G'$, 
then it was replaced in $G'$ by a face $F'$ that is, roughly speaking, the union of $F$ and a number of bigons.
Formally, the facial boundary of $F'$ is the symmetric difference of the facial boundary of $F$ with
all those bigons where the deleted edge was incident to $F$.
Some edge $e'$ of $F'$ was in $E(H')$, hence $e'$ is either on $F$ or it belongs to a bigon $\{e',e''\}$
	with $e''$ incident to $F$.   In the latter case we added $e''$ to $H$. \changed{Therefore} either way \deleted{therefore} some
edge of $F$ is in $E(H)$.
\end{proof}

\subsection{Putting It All Together}
\label{sec:together}

We now put all ingredients together to prove Theorem~\ref{thm:bipEdgeCover} as follows.
%
We already know the result for 2-connected graphs (Corollary~\ref{cor:2connectedBigons}).
The way to generalize it to arbitrary connected
graphs is very standard:   Compute the bipartite subgraph for each 2-connected component of $G$,
and combine them to obtain a bipartite subgraph of $G$.
However, if $G$ is not simple then we have to be a bit more careful in
choosing the reference-edge so that a $3^+$-face is hit even if its boundary contains loops.

The details are as follows.
For ease of description choose the outer face of $G$ such that it contains the reference-edge $\hat{e}$.
Compute the 2-connected components of $G$; this can be done in linear time \citep{HT73}.
Consider a non-trivial 2-connected component $C$ (i.e., $C$ has at least two vertices).
If $C$ does not contain $\hat{e}$, then choose as reference-edge~$\hat{e}_C$ of $C$ an arbitrary edge on the outer face of $C$
in the induced planar embedding;
otherwise set $\hat{e}_C=\hat{e}$.  Apply 
Corollary~\ref{cor:2connectedBigons} to $C$ with $\hat{e}_C$ as reference-edge.
This gives a bipartite subgraph $H_C$ of $C$ for which $E(H_C)$ is a $3^+$-face hitting edge cover.
Let $H$ be the union of all these subgraphs; since 2-connected components intersect in at most one
	vertex and $H$ has no loops, this is bipartite.    Since (in a connected graph with \changed{at least two} vertices) every vertex belongs to 
at least one non-trivial 2-connected component $C$, and
is incident to an edge of $H_C$, the set $E(H)$ is an edge cover.

Now we must show that $E(H)$ hits any $3^+$-face $F$ of $G$.
	\changed{(Figure~\ref{fig:merge} illustrates some possibilities for $F$.)}
If $F$ is a face of one 2-connected component $C$, then $E(H_C)$ (and hence $E(H)$) hits $F$.
If $F$ is not a face of one 2-connected component, then let
$B$ be the boundary walk of $F$ and let $C_1,\dots,C_\ell$ for $\ell\geq 2$
be the 2-connected components intersected by $B$.      Note that $B\cap E(C_i)$ is a face of $C_i$
(in the induced planar embedding of $C_i$) for all $i$.   If this is a $3^+$-face of $C_i$ then
$E(H_{C_i})$ will include an edge of it and so $E(H)$ hits $F$.    If $B\cap E(C_i)$ is the outer
face of $C_i$, and $C_i$ is not a loop, then reference edge $\hat{e}_{C_i}$ was chosen on $B\cap E(C_i)$
and included in $E(H_{C_i})$, and so $E(H)$ hits $F$.   

\begin{figure}[ht]
\hspace*{\fill}
	{\includegraphics[scale=1,page=1]{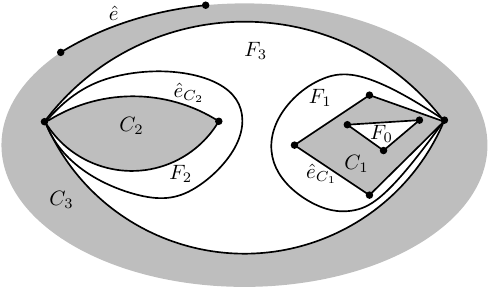}}
\hspace*{\fill}
	\changed{
	\caption{A graph that is not 2-connected; non-trivial 2-connected components are shaded.   Face $F_0$ is a face of $C_1$, hence is hit by $E(H_{C_1})$.   The boundary of $F_1$ meets $C_1$ in a $3^+$-face, hence $F_1$ is hit by $E(H_{C_1})$.     The boundary of face $F_2$ meets $C_2$ in the outer face of $C_2$, hence $F_2$ is hit by $\hat{e}_{C_2}$. Face $F_3$ is incident only to loops and a bigon of $C_3$ and not a $3^+$-face.}
	}
	\label{fig:merge}
\end{figure}

So we are done unless for $i=1,\dots,\ell$, either $B\cap E(C_i)$
is a loop, or $B\cap E(C_i)$ bounds a bigon of $C_i$ that is an inner face of $C_i$ and hence contains
face $F$ in its interior.   The latter can happen for at most one index $i$ since face $F$ cannot
be interior to two bigons that both share edges with it.   Therefore the boundary of $F$ consists
either only of loops, or of one bigon with some loops attached at its endpoints.
Either way $F$ has size at most 2 and is not a $3^+$-face.

\medskip
In summary, the edges of $H$ are a $3^+$-face-hitting edge cover, which proves Theorem~\ref{thm:bipEdgeCover}
and hence Theorem~\ref{thm:main} (subject to proving Proposition~\ref{lem:makeHappy}, which will happen in Section~\ref{subsec:allFaces}).

\subsection{Hitting faces of smaller sizes}

We briefly discuss here how the bounds change if we want to hit not only the $3^+$-faces but in fact \emph{all} faces 
with a dominating set $S$.   (This makes a difference only if the graph is allowed to be not simple.)

If we permit faces of degree 1 (i.e., faces whose boundary is a loop), then we must include the unique vertex of each such face
in $S$ to be face-hitting, and so $|S|=n$ may be required since we may have a degree-1 face at every vertex.   
See Figure~\ref{fig:bad}(a).

In fact, $|S|\approx n$ may be required even in a graph that does not have faces of degree 1 but may have loops.   A face
that is bounded by two parallel loops has degree 2, but still forces its unique vertex to be in any face-hitting vertex set $S$. 
We can build a graph that has such faces at $n-2$ vertices, without having a face of degree 1, see Figure~\ref{fig:bad}(b).
Hence $|S|\geq n-2$ may be required.

If we forbid loops, but permit bigons, then $|S|\geq \tfrac{3}{4}n$ may be required:   Take $n/4$
copies of $K_4$, replace every edge by a bigon, and then connect the copies with arbitrary further edges to make the
graph connected and planar (we can even achieve higher connectivity,  see Figure~\ref{fig:bad}(c)).
To hit the bigons at each $K_4$, set $S$ must include a \emph{vertex cover} of $K_4$, i.e.,
a vertex set $C$ such every edge has an endpoint in $C$.   Any vertex cover of $K_4$ must include
3 out of 4 vertices, and so $|S|\geq \tfrac{3}{4}n$.

\begin{figure}[ht]
\hspace*{\fill}
\subfigure[]{\includegraphics[scale=1.1,page=1]{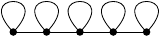}}
\hspace*{\fill}
\subfigure[]{\includegraphics[scale=1.1,page=2]{bad.pdf}}
\hspace*{\fill}
\subfigure[]{\includegraphics[scale=1.1,page=3]{bad.pdf}}
\hspace*{\fill}
\caption{Graphs where any face-hitting dominating set $S$ must be large.  (a) With faces of degree 1, we may need $|S|=n$.
(b) With loops, we may need $|S|\geq n-2$ even if all faces have degree 2 or more.
(c) With bigons, we may need $|S|\geq \tfrac{3}{4}n$ even if there are no loops.}
\label{fig:bad}
\end{figure}

On the positive side, for any connected plane graph $G$, any vertex cover is a face-hitting dominating set
since every vertex and face is incident to at least one edge.   If $G$ has no loops then it is 4-colourable,
and the three smallest colour classes form a vertex cover of size at most $\tfrac{3}{4}n$.   So
the lower bound for plane graphs without loops can be matched.
Note that due to the use of the 4-colour theorem, this proof only yields a quadratic-time (and difficult
to implement) algorithm to find the vertex cover. Reducing this run-time to linear would be equivalent to 
finding an independent set of size at least $n/4$ in any
planar graph in linear time, and no such results seem known (see \citet{BW05} for related discussion).

\section{Making Everyone Happy}
\label{sec:makeHappy}

To goal of this section is to prove Proposition~\ref{lem:makeHappy}, and more generally
to study the problem of finding a triangulated
plane supergraph of a given plane graph $G$ such that many (ideally all) vertices are happy.
As before, we assume that $G$ is connected and \changed{has at least two vertices}.
We also assume that all faces of $G$ have degree 3 or more, otherwise we cannot hope to make
the graph triangulated by adding edges.   

\changed{For ease of reference, we repeat and clarify here a few definitions. Let $G$ be a plane graph
and $G^+$ be a supergraph (not necessarily a triangulation).   We say that an angle $\angle uvw$ of $G^+$ is 
\emph{happy} if the incident face in $G^+$ is a triangle, and both edges $(u,v)$ and $(v,w)$ are edges of $G$.
We say that adding another edge $(u,w)$ to $G^+$ \emph{makes $\angle uvw$ happy} if it is happy afterwards due
to this edge, i.e., $(u,w)$ closed up a triangular face $\{u,v,w\}$ and both $(u,v)$ and $(v,w)$ are
edges of $G$.    
Note that once an angle $\angle uvw$ has been made happy, it will remain happy
even during later edge-additions since $\{u,v,w\}$ will remain a face. 
A vertex $v$ of $G^+$ is \emph{happy} there exists a happy angle $\angle uvw$,
and $v$ \emph{is made happy} by adding an edge $(u,w)$ if this made an angle $\angle uvw$ happy.}

\subsection{Some bad examples}

We first give some examples of graphs that justify the restrictions/permission made
on the input and output of Proposition~\ref{lem:makeHappy} (this will also be a convenient warm-up).
These graphs are illustrated in Figure~\ref{fig:moreBad}.
First, the proposition assumes that the input graph $G$
is 2-connected.   This is required: we can otherwise make only
a fraction of the vertices happy in some graphs.

\begin{lemma}
\label{lem:not1conn}
For $n$ divisible by 3, there exists a connected plane $n$-vertex graph $G$ 
where any plane triangulation has at most $\tfrac{2}{3}n$ happy vertices.
\end{lemma}
\begin{proof}
Let $C$ be a cycle with $n/3$ vertices, and attach 
two vertices $x,y$ of degree~1
at every vertex $v$ of the cycle.   Fix a plane embedding and
outerface such that $x$ is inside $C$ and $y$ is outside.   Now consider
	\changed{adding edges to create some}
plane triangulation of this graph.   To make vertex $x$ happy,
we must make its unique incident angle happy, which can be done only by
adding a loop at $v$ that encloses $x$.   (So in particular, we cannot make
$x$ happy at all if adding loops is forbidden.)   To make vertex $y$ happy,
we must add another loop that encloses $y$, and the two loops together add
edges inside all four angles incident to $v$.   So if $x,y$ are both happy
then $v$ cannot be happy, meaning that out of each triple of vertices at
	least one is \changed{never made} happy. 
\end{proof}

Next, the proposition permits parallel edges in the triangulation.
Again this is required: we can otherwise make
only a constant number of vertices happy in some graphs.

\begin{lemma}
\label{lem:notSimple}
For any $n\geq 4$, there exists a 2-connected plane $n$-vertex graph $G$ 
where any simple triangulation has at most four happy vertices.
\end{lemma}
\begin{proof}
Consider the graph $K_{2,n-2}$, i.e., two vertices $a,a'$ are both adjacent
	to all of $n{-}2$ vertices $b_0,\dots,b_{n-3}$.   This has a unique \changed{rotation scheme} (up to renaming), and
its faces
are  bounded by the 4-cycles $\{a,b_i,a',b_{i+1}\}$ for $i=0,\dots,n{-}3$, addition modulo $n{-}2$.   
To triangulate such a face, we can either insert $(a,a')$ or $(b_i,b_{i+1})$, and a vertex $b_i$
	is \changed{made} happy only if it is incident to a face where we inserted $(a,a')$.   But in a 
	simple triangulation  we can only have one copy of $(a,a')$, so at most \changed{$a,a'$ and} two
of the vertices $b_0,\dots,b_{n-3}$ can be happy.
\end{proof}

Finally, the proposition permits that one vertex $s$ is not happy.  Again this
is required.

\begin{lemma}
\label{lem:notOddCycle}
For any $n\geq 5$ odd, any triangulation of the $n$-cycle 
has at most $n-1$ happy vertices.
\end{lemma}
\begin{proof}
If all vertices were happy, then in one of the two faces we
would have at least $\lceil n/2 \rceil$ happy angles, and by $n$ odd
and the pigeonhole principle
hence have two consecutive ones.   This is impossible, since the edge that
makes an angle happy necessarily gets inserted into the two adjacent angles
of the face.
\end{proof}

\begin{figure}[ht]
\hspace*{\fill}
\subfigure[]{\includegraphics[width=0.32\linewidth,page=4]{bad.pdf}}
\hspace*{\fill}
\subfigure[]{\includegraphics[width=0.32\linewidth,page=5]{bad.pdf}}
\hspace*{\fill}
\subfigure[]{\includegraphics[width=0.32\linewidth,page=6]{bad.pdf}}
\hspace*{\fill}
\caption{Graphs where triangulations have few happy angles (shaded green).
(a) A graph with cutvertices.   (b) Demanding a simple triangulation. (c)
We cannot make two consecutive angles happy.}
\label{fig:moreBad}
\end{figure}

Surprisingly, as we will prove below, $C_n$ (for $n\geq 5$ odd) is the \emph{only} 2-connected
graph where we have to have an unhappy vertex if the triangulation is allowed to have
parallel edges.

\subsection{Triangulating a single face}

Towards the proof of Proposition~\ref{lem:makeHappy}, 
we first study how to triangulate a single face (modelled here as
triangulating the interior of a cycle).   We show a stronger
statement: we can 
\deleted{exactly}%
prescribe which angles become happy
(presuming none of them are consecutive, which is obviously necessary).
\changed{In fact, we can even prescribe which angles should \emph{not}
be happy (as long as at least two are allowed to be happy); this is
not a property that will be used later, but we find it interesting in
its own right.}

\begin{lemma}
\label{lem:one_face}
Let $C_n$ be an $n$-cycle for $n\geq 3$, and
let $A$ be any subset of the vertices such that no two of them 
are consecutive on the cycle.   
There exists a simple triangulation of the interior of $C_n$
such that all interior angles at vertices of $A$ are happy.
It can be found in linear time.

If $|A|\geq 2$ then we can furthermore ensure that 
an interior angle of $C_n$ is happy if and
	only if it \deleted{is} belongs to a vertex of $A$.
\end{lemma}
\begin{proof}
Nothing is to show if $n=3$ since then $C_n$ is already a triangulated graph
and all angles are happy.   So assume that $n\geq 4$ and
pad $A$, if needed, with extra vertices to achieve $|A|\geq 2$.
If $n=4$ then by $|A|\geq 2$ and assumption we alternatingly have
vertices in $A$ and not in $A$ along $C_n$; adding an edge between
the two vertices not in $A$ then satisfies all conditions.

	So assume that $n\geq 5$ \changed{and consult Figure~\ref{fig:cycle} for an illustration of the following steps.}
	We first add, for every vertex $v\in A$, 
an edge inside $C_n$ between its two neighbours $v',v''$ on $C_n$; the naming
is such that $v'$ is the ccw neighbour of $v$.
This makes the interior angle incident to $v$ happy and does not add a bigon by $n\geq 5$.  
It also maintains planarity since vertices of $A$ are not consecutive along $C_n$,
and so none of the added edges cross each other.   

Let $F$ be the face of the resulting graph that is incident to all
added edges.  If we did not care about
	accidentally \changed{creating further happy vertices}, then we could simply triangulate $F$ arbitrarily.   
To make \emph{only} the vertices of $A$ happy, we have to be a bit
more careful.  Recall that $|A|\geq 2$, and let $v,w$ be two vertices in $A$.
Their neighbours $v',v'',w',w''$ are incident to $F$ (where possibly $v''=w'$ or
$w''=v'$).    If $w'\neq v''$ then add an edge $(v',w')$ (if $w'=v''$ then this
edge exists already).  Then add edges from $v'$ to everyone
that is strictly between $v''$ and $w'$ on $F$ in clockwise order, and add edges from $w'$
to everyone that is strictly between $w''$ and $v'$ on $F$ in clockwise order.
This does not make any other interior angle of $C_n$
happy since each of the resulting triangular faces has at most one edge on $C_n$.        
\end{proof}

\begin{figure}[ht]
\hspace*{\fill}
\subfigure[]{\includegraphics[scale=1,page=1]{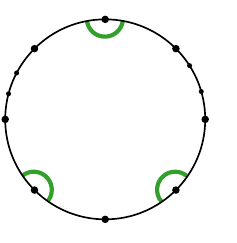}}
\hspace*{\fill}
\subfigure[]{\includegraphics[scale=1,page=2]{ear_journal.pdf}}
\hspace*{\fill}
\subfigure[]{\includegraphics[scale=1,page=3]{ear_journal.pdf}}
\hspace*{\fill}
	\caption{\changed{Triangulating the interior such that a} prescribed set of angles (indicated by green circles) \changed{is} happy.}
\label{fig:cycle}
\end{figure}

\subsection{Triangulating all faces}
\label{subsec:allFaces}

We know from Lemma~\ref{lem:notOddCycle} that odd-length cycles will be a
special case for Proposition~\ref{lem:makeHappy}, and so we prove the result
for them separately.

\begin{observation}
\label{cor:odd_cycle}
Proposition~\ref{lem:makeHappy} holds for any odd-length cycle $C_n$. 
\end{observation}
\begin{proof}
Recall that we are given one vertex $s$ that need not be made happy.
Enumerate the vertices of $C_n$ as 
	\changed{$s,v_1,\dots,v_{n-1}$.}
	Apply Lemma~\ref{lem:one_face} to triangulate the interior of $C_n$ to make 
	\changed{$v_2,v_4,\dots,v_{n-1}$} happy, and apply it again to triangulate the exterior of
	$C_n$ to make \changed{$v_1,v_3,\dots,v_{n-2}$} happy.   This can
clearly be implemented in linear time.
\end{proof}

As promised, for all other 2-connected plane graphs we can make all vertices
happy.  Put differently, we now prove Theorem~\ref{thm:make_happy},
which together with Observation~\ref{cor:odd_cycle} implies Proposition~\ref{lem:makeHappy}.    
We restate the theorem here for ease of reference, and then devote the rest of the 
subsection to proving it.

\ThmMakeHappy*

Crucial to the proof of this theorem is the idea of building up the graph $G$
``piece by piece'', adding one face at a time, and triangulating this face 
while maintaining a suitable invariant. 
This ``building up the graph'' uses a well-known tool that we define first.

\begin{definition}
An \emph{ear decomposition} of a graph $G$ is a collection $\langle P_1,\dots,P_f \rangle$ of paths (``\emph{ears}'') in $G$ with the following properties:
\begin{itemize}
\item Every edge of $G$ belongs to exactly one path.
\item $P_1$ is a cycle, i.e., its endpoints coincide.
\item For $i>1$, $P_i$ is a path with two distinct endpoints.   
	The endpoints of $P_i$ are in $V_{i-1}{:=}V(P_1){\cup}\dots \allowbreak{\cup}V(P_{i-1})$, while
	the interior vertices of $P_i$ (if any) are not in $V_{i-1}$.
\end{itemize}
\end{definition}

It is well-known that a graph $G$ with \changed{at least three} vertices
is 2-connected if and only if it has an ear decomposition \citep{Whitney31b}.    
Given an ear decomposition, we define (for $i\geq 1$) the \emph{partial graph $G_i$} to be the graph with vertices $V_i$ 
and all edges of $P_1,\dots,P_i$; note that $G_i$ is 2-connected since it has an ear decomposition.   

It is known that if $G$ is a plane graph, then we can put extra conditions on
ear decompositions.   We have not been able to find an explicit reference for this result,
and hence provide a short proof, mostly based on the insights in \citet{RT86}
and illustrated in Figure~\ref{fig:ear_planar}.

\begin{lemma}
\label{lem:earDecompPlanar}
	Let $G$ be a 2-connected plane graph with \changed{at least three} vertices.   Then for any face $F$ of $G$, 
there exists an ear decomposition $\langle P_1,\dots,P_f \rangle$ of $G$ 
and a choice of outer face of $G$ such that
$P_1$ bounds face $F$, and for $i>1$ path $P_i$ lies on
	the outer face of \changed{the partial graph} $G_i$ in the induced embedding of $G_i$.
Such an ear decomposition can be found in linear time.
\end{lemma}
\begin{proof}
Let $(s,t)$ be an arbitrary edge on $F$.   Since $G$ is 2-connected, it has
a vertex ordering $v_1,\dots,v_n$ such that $v_1=s$, $v_n=t$, and every
other vertex has at least one earlier and at least one later neighbour \citep{LEC67}.
This can be computed in linear time \citep{ET76}.   
Direct every edge from the lower-indexed to the higher-indexed endpoint to obtain 
a \emph{bipolar orientation} \citep{RT86}, i.e., 
an acyclic edge-orientation where $s$ is the only source and $t$
is the only sink.   Consider the dual graph $G^*$, and direct it such that every dual edge $e^*$
crosses the corresponding edge $e$ of $G$ left-to-right.   
If we delete the edge dual to $(s,t)$ from $G^*$, then the resulting orientation is also a bipolar
orientation, with the source and sink the two faces incident to $(s,t)$ \citep{RT86}.   
Up to reversal of edges we may
therefore assume that $v_F$ is the source of $G^*$ while the vertex of the outer face is the sink.
Compute a topological order of $G^*$ to enumerate the faces of $G$ as $F_1,\dots,F_{f+1}$, with
$F_1=F$ and $F_{f+1}$ the outer face.   

Set $P_1$ to be the boundary of $F{=}F_1$.  To define $P_i$ for $2\leq i\leq f$,
consider the interior face $F_i$. Since we have a bipolar orientation of $G^*$,
vertex $v_{F_i}$ is neither a source nor a sink.  It is known that
its outgoing edges are consecutive in the clockwise order of edges, as are its incoming edges
\citep{TT86}.
Translated from $G^*$ to $G$, therefore the boundary of $F_i$ is split 
into two non-empty parts, one part with edges shared with smaller-indexed faces (hence in earlier paths), and one part 
with edges shared with larger-indexed faces.  Define $P_i$ to be to be this latter part.  One easily verifies all required
properties.
\end{proof}

\begin{figure}[ht] \centering
\hspace*{\fill}
\includegraphics[scale=0.5,page=2,trim=0 0 0 0,clip]{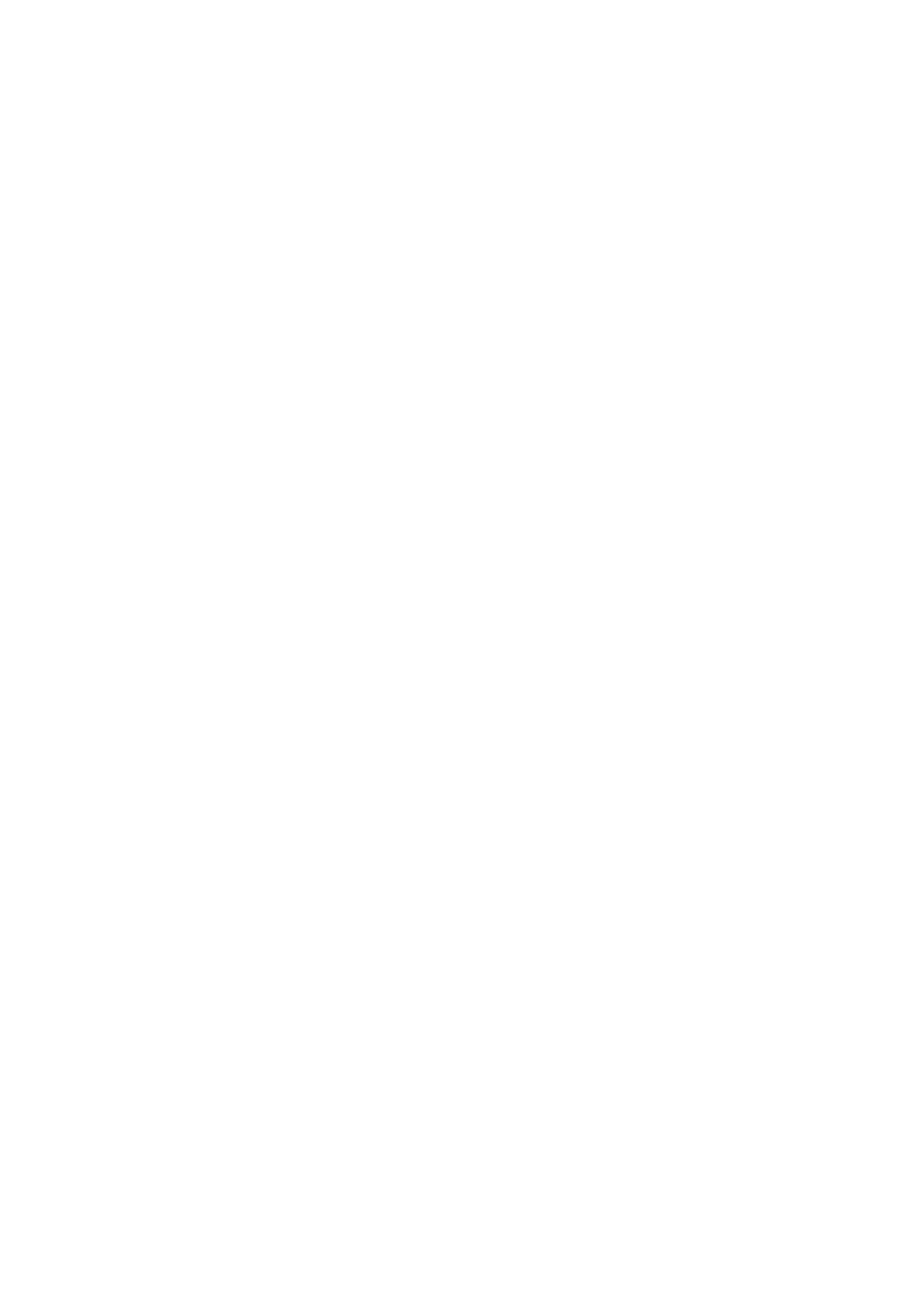}
\hspace*{\fill}
\includegraphics[scale=0.5,page=3,trim=0 0 0 0,clip]{ear_planar.pdf}
\hspace*{\fill}
\caption{Finding an ear decomposition of a plane 2-connected graph via a topological order in the bipolar orientation
	of the dual graph.}
\label{fig:ear_planar}
\hspace*{\fill}
\end{figure}


Continuing the proof of Theorem~\ref{thm:make_happy}, we compute an ear decomposition 
$\langle P_1,\dots,P_f \rangle$ 
of the given graph $G$ by using Lemma~\ref{lem:earDecompPlanar}, choosing as face $F$ 
one of even degree if there are such faces, and an arbitrary one otherwise.
This also enumerates the interior faces as $F_1,\dots,F_f$, with $P_i$ on the boundary
of $F_i$ for $i=1,\dots,f$.
As outlined earlier, the plan is to triangulate $F_i$, for $i=1,\dots,f$,
such that the resulting supergraph $G_i^+$ of $G_i$ satisfies the following
{\bf invariant}:
\begin{enumerate}[(a)]
\item Every interior vertex of $G_i^+$ is happy. \label{it:interior}
\item For any edge $e$ on the outer face of $G_i^+$, at least one endpoint of $e$ is \emph{interior-happy},
	i.e., it is incident to a happy angle at an interior face.
	\label{it:outer}
\end{enumerate}
Since we build $G_i^+$ by triangulating $F_1,\dots,F_i$, and always do so
by appealing to Lemma~\ref{lem:one_face}, it
also follows automatically that the outer face of $G_i^+$ is the same as the
outer face of $G_i$, every interior face of $G_i^+$ has degree 3, and no loops are added.

\medskip\noindent{\bf Base case if $G$ has a face of even degree:}  
If $G$ has a face of even degree, then with our choice of ear decomposition face
$F_1$ has even degree.   See also Figure~\ref{fig:baseCase}(a).
We can then appeal to Lemma~\ref{lem:one_face} to triangulate $F_1$ while
making every second interior angle happy.    Since $G_1$ has no interior vertices, the
invariant then clearly holds for the resulting $G_1^+$.

\medskip\noindent{\bf Base case if $G$ has only faces of odd degree:}  
In this case we will triangulate faces $F_1$ and $F_2$ at the same time such
that the invariant holds for $G_2^+$.   Note first that $F_2$ must
exist since otherwise $G$ would be a cycle of odd length.   It also must share some
edges with $F_1$ since the endpoints of $P_2$ are distinct. 
So $G_2$ consists of three disjoint paths, call them $\pi_a=\langle a_1,\dots,a_p \rangle$,
$\pi_b=\langle b_1,\dots,b_q \rangle$, and
$\pi_c=\langle c_1,\dots,c_r \rangle$, with $a_1=b_1=c_1$ and $a_p=b_q=c_r$.
We assume that the naming is such that $F_1$ borders $\pi_a$ and $\pi_b$
while $F_2$ borders $\pi_b$ and $\pi_c$, hence the interior vertices of $\pi_b$
are not on the outer face of $G_2$.   See also Figure~\ref{fig:baseCase}(b) and (c).
We know that $p+q$ is odd, as is $q+r$, since all faces have odd degree.
In consequence the outer face of $G_2$ (which is bounded by the cycle
formed by $\pi_a\cup \pi_c$) has even degree.   Set $A_0$
to be every other vertex on the outer face, choosing among the two possibilities
such that $a_1$ is \emph{not} in $A_0$.   
For $i\in\{1,2\}$, write $A_0^i$ for all those vertices  in $A_0$ that are incident to face $F_i$.

We now have two subcases.   If $q$ is even (hence $p$ and $r$ are odd) then
vertex $b_q$ is \emph{not} in $A_0$.   Set $A_1$ to be $\{b_3,b_5,\dots,b_{q-1}\}$    
and $A_2$ to be $\{b_2,b_4,\dots,b_{q-2}\}$.
Observe that for $i\in \{1,2\}$ the
set $A_0^i\cup A_i$ consists of vertices on $F_i$, no two of which are consecutive.
We can hence appeal to Lemma~\ref{lem:one_face} to make the angles incident to $F_i$ at $(A_0\cap V(F_i))\cup A_i$ 
happy by triangulating $F_i$.   Thus in the resulting graph $G_2^+$ all interior vertices
of $\pi_b$ are happy, and every other vertex on the outer face is interior-happy, and the
invariant holds.

If $q$ is odd, then define $A_1$ and $A_2$ similarly, but this time $A_1:=\{b_3,b_5,\dots,b_q\}$
includes $b_q$ while
$A_2=\{b_2,b_4,\dots,b_{q-1}\}$.   Observe that we had $b_q\in A_0$, and we now, for $i=1,2$,
appeal to Lemma~\ref{lem:one_face} to triangulate face $F_i$ using set $(A_0^i\setminus b_q)\cup A_i$ after verifying
that no two of these vertices are consecutive on $F_i$.   This makes all interior vertices of
$\pi_b$ happy in $G_3^+$.   It also makes all vertices of $A_0$ happy since $b_q\in A_1$, and
the invariant holds.

\begin{figure}[ht]
\hspace*{\fill}
\subfigure[]{\includegraphics[scale=1,page=4]{ear_journal.pdf}}
\hspace*{\fill}
\subfigure[]{\includegraphics[scale=1,page=5]{ear_journal.pdf}}
\hspace*{\fill}
\subfigure[]{\includegraphics[scale=1,page=6]{ear_journal.pdf}}
\hspace*{\fill}
\caption{Three base cases:  (a) $F_1$ has even degree.   (b) All faces have
odd degree, and $q$ is even.   (c) All faces have odd degree, and $q$ is odd.  }
\label{fig:baseCase}
\end{figure}

\medskip\noindent{\bf Step:}   Assume now that $G_{i-1}^+$ (for some $i\geq 2$) satisfies the invariant.
Let $P_i=\langle y_0,y_1,\dots,y_k,y_{k+1}\rangle$ for some $k\geq 0$ be the next ear, so
$y_0\neq y_{k+1}$ are vertices of $G_{i-1}$ while
all other vertices of $P_i$ (if any) are new.   Enumerate the interior face $F_i$ incident to $P_i$ as 
$$F_i=\langle y_0,y_1,\dots,y_k,y_{k+1}{=}z_0,z_1,\dots,z_\ell,z_{\ell+1}=y_0\rangle,$$
so $\langle z_0,z_1,\dots,z_\ell,z_{\ell+1}\rangle$ for some $\ell\geq 0$ was a path along the outer face of $G_{i-1}$ between the ends of $P_i$.
The plan is again to appeal to Lemma~\ref{lem:one_face}, and to this end, we define
a set $A$ as follows (see also Figure~\ref{fig:step}).

First we want to ensure Invariant (\ref{it:interior}), i.e., that all interior vertices of $G_i^+$ are happy.   
	This holds for all such vertices except $z_1,\dots,z_\ell$ by induction.   So set $A_1$ to
	be the set of all vertices of $z_1,\dots,z_\ell$ that were not interior-happy in $G_{i-1}^+$.
	Since the invariant held for $G_{i-1}^+$, no two of them were consecutive on the outer face of $G_{i-1}$,
	hence no two of them are consecutive on $F_i$, not even if $k=0$ since $y_0,y_{k+1}\not\in A_1$.

Next we want to ensure Invariant (\ref{it:outer}), i.e., that for any edge on the outer face at least one endpoint is interior-happy.
	This holds for all edges except the ones of $P_i$ by induction.     To ensure it for $P_i$, we have
	have two cases.   
\begin{enumerate}
\item 

	In the first case, one of $y_0$ and $y_{k+1}$ (say $y_{k+1}$) is interior-happy already.   
	We set $A_2:=\{y_1,y_3,\dots,\allowbreak y_{2\lceil k/2 \rceil-1}\}$, i.e., we take
	every odd-indexed vertex of $P_i$ until $y_{k-1}$ or $y_k$, depending on the parity of $k$.
	See Figure~\ref{fig:step}(a-b).
	Note that every edge in $P_i$ has one endpoint in $A_2\cup \{y_{k+1}\}$, 
	so creating happy angles at the vertices of $A_2$ suffices to satisfy (\ref{it:outer})
	since $y_{k+1}$ is already interior-happy.
	Also none of the vertices in $A_2$ is consecutive on $F_i$ with a vertex of $A_1$ 
	since $y_0,y_{k+1}\not\in A_1$.

\item
	In the second case, neither $y_0$ nor $y_{k+1}$ was already interior-happy.   This implies that
	$\ell\geq 1$, since by Invariant (b) they then must not be consecutive on the outer face of $G_{i-1}^+$.
	Again using Invariant~(b), this implies that $z_1$ and $z_\ell$ \emph{were} interior-happy,
	and in particular, do not belong to $A_1$.
	Set $A_2:=\{y_0,y_2,\dots,y_{2\lceil k/2 \rceil}\}$, i.e., we take every even-indexed
	vertex of $P_i$ until $y_k$ or $y_{k+1}$, depending on the parity of $k$.   
	See Figure~\ref{fig:step}(c).
	Note that none of these vertices
	(not even $y_0$ or $y_{k+1}$) are consecutive in $F_i$ with a vertex of $A_1$ since $z_1,z_\ell\not\in A_1$.
\end{enumerate}

So we have chosen $A_1$ and $A_2$ such that no two vertices of $A:=A_1\cup A_2$ are consecutive
	along $F_i$, and such that creating happy angles at the vertices in $A$ ensures the invariant. 
	We hence use Lemma~\ref{lem:one_face} with $A$ to triangulate $F_i$ and obtain the desired graph $G_i^+$ 

\begin{figure}[ht]
\hspace*{\fill}
\subfigure[]{\includegraphics[width=0.25\linewidth,page=7]{ear_journal.pdf}}
\hspace*{\fill}
\subfigure[]{\includegraphics[width=0.25\linewidth,page=8]{ear_journal.pdf}}
\hspace*{\fill}
\subfigure[]{\includegraphics[width=0.25\linewidth,page=9]{ear_journal.pdf}}
\hspace*{\fill}
\caption{Three examples of the step:  (a) $y_{k+1}$ was interior-happy, $k$ odd. (b) $y_{k+1}$ was interior-happy, $k$ even.
(c) Neither end of $P_i$ was interior-happy.  Vertices that were already happy in $G_{i-1}^+$ are indicated with
filled circular arcs.}
\label{fig:step}
\end{figure}

\medskip\noindent{\bf Triangulating the outer face.}
Using the algorithm inherent in the above inductive proof, we
can build $G_n^+$ in linear time since triangulating face $F_i$ takes time proportional
to its degree.    With this all interior vertices are happy.
Let $A$ be the set of vertices on the outer face of $G_n^+$ that are not interior-happy.   By Invariant~(b)
no two of these are consecutive on the outer face, so by appealing to Lemma~\ref{lem:one_face} once more with the outer face
and $A$, we have hence proved Theorem~\ref{thm:makeHappy} and Proposition~\ref{lem:makeHappy}.

\section{Further Thoughts}
\label{sec:odds}

In this paper, we gave a new proof that every simple planar connected graph with \changed{at least two} vertices has a vertex partition into two
face-hitting dominating sets, with the goal of developing a linear-time algorithm to find this vertex partition.  
The proof extends to graphs that are not simple, as long as we are content with hitting only those faces that 
are incident to at least three distinct vertices.    We also studied one crucial ingredient, the problem of
triangulating a given plane graph such that (almost) all vertices are happy.   We characterized exactly when
this can be done for a 2-connected graph if parallel edges are allowed, and gave various examples where it
cannot be done if the graph is not 2-connected or the triangulation must be simple.

One natural open problem concerns bounds for face-hitting dominating sets in 4-connected planar graphs.
The lower-bound example of  \citet{BSTZ97} has 3-cycles that are not faces, i.e., it is not 4-connected.   Can we improve the
upper bound on the size of face-hitting dominating sets for 4-connected planar graphs, at least for triangulations?
Another open problem concerns making (almost) all vertices happy in plane graphs that are not necessarily 2-connected.
Are vertices of degree 1 the only obstacle, so can we make (almost) all vertices happy in a connected plane graph
with minimum degree 2, or at least  in any 2-edge-connected plane graph?


\bibliographystyle{plainnat}
\bibliography{refs}
\label{sec:biblio}

\end{document}